\let\proof\relax
\let\endproof\relax
\newtheorem{theorem}{Theorem}
\newtheorem{definition}{Definition}[section]
\newtheorem{remark}{Remark}
\newtheorem{lemma}{Lemma}
\newtheorem{assumption}{Assumption}
\DeclareMathOperator*{\diag}{diag}
\let\NAT@parse\undefined
\title{\LARGE \bf Liquid-Graph Time-Constant Network \\*for Multi-Agent Systems Control}
\author{Antonio Marino$^{1}$, Claudio Pacchierotti$^{2}$, Paolo Robuffo Giordano$^{2}$ 
\thanks{$^{1}$ A. Marino is with Univ Rennes, CNRS, Inria, IRISA -- Rennes, France. E-mail: antonio.marino@irisa.fr.}
\thanks{$^{2}$ C. Pacchierotti and P. Robuffo Giordano are with CNRS, Univ Rennes, Inria, IRISA -- Rennes, France. E-mail: \{claudio.pacchierotti,prg\}@irisa.fr.}
\thanks{This work was supported by the ANR-20-CHIA-0017 project ``MULTISHARED''.}
}
\begin{document}

\maketitle
\thispagestyle{empty}
\pagestyle{empty}

%%%%%%%%%%%%%%%%%%%%%%%%%%%%%%%%%%%%%%%%%%%%%%%%%%%%%%%%%%%%%%%%%%%%%%%%%%%%%%%%
\begin{abstract}
In this paper, we propose the Liquid-Graph Time-constant (LGTC) network, a continuous graph neural network (GNN) model for control of multi-agent systems based on the recent Liquid Time Constant (LTC) network. We analyse its stability leveraging contraction analysis and propose a closed-form model that preserves the model contraction rate and does not require solving an ODE at each iteration. Compared to discrete models like Graph Gated Neural Networks (GGNNs), the higher expressivity of the proposed model guarantees remarkable performance while reducing the large amount of communicated variables normally required by GNNs. We evaluate our model on a distributed multi-agent control case study (flocking) taking into account variable communication range and scalability under non-instantaneous communication.
\end{abstract}

\begin{keywords}
 Distributed Control, Graph Neural Network, Stability Analysis
\end{keywords}

\section{INTRODUCTION}
%In recent years, the interest in multi-agent systems (MAS) has significantly increased~\cite{dorri2018multi}. The effective coordination of multiple agents necessitates an understanding of group dynamics. Particularly in multi-robot scenarios, generating individual robot motions relies not only on local sensing but also on information about the group's state, often obtained through communication with a limited subset of nearby team members. As a result, communication becomes a crucial element in achieving distributed solutions for MAS.
%

Communication is a crucial element in achieving distributed solutions for multi-agent systems (MAS) from control to planning~\cite{I-Magnus2017ControlMRSsurvey}. Like many distributed control algorithms, also learning-based control can benefit from inter-agent communication to partition the prediction process across multiple machines contributing to task scalability and prediction accuracy~\cite{majcherczyk2021flow}.

%Neural networks have become prominent in data-driven control applications due to their outstanding approximation capabilities~\cite{huang2006universal}. In distributed control, neural networks offer particular advantages by efficiently approximating complex distributed policies without the need for intricate optimizations and designs. The literature presents numerous data-driven approaches, with a focus on those rooted in reinforcement learning~\cite{batra2022decentralized, he2020integral}. However, these approaches operate on local sensing data without inter-agent communication. On the other hand, distributed machine learning employs communication to partition the learning process across multiple machines contributing to group knowledge~\cite{majcherczyk2021flow}.
%

Leveraging communication, recent trends in data-driven distributed control involve employing Graph Neural Networks (GNNs) to encode distributed control and planning solutions. In this respect, Gama et al.~\cite{gama2021graph} extend GNNs to flocking control for large teams. Additional examples of the use of GNNs for distributed control are found in space coverage~\cite{li2021message}, multi-robot path planning~\cite{tolstaya2021multi}, and motion planning~\cite{khan2020graph}, including obstacle-rich environments~\cite{ji2021decentralized}. GNNs also enhance multi-agent perception~\cite{zhou2022multi} and enable distributed active information acquisition~\cite{P-Zhou2022GNNPerception}, translating multi-robot information gathering into graph representation and formulating GNN-based decision-making.

In the recent literature, one of the objectives is to make the learning-based method robust and stable~\cite{hewing2020learning}. In this context, many works applied contraction analysis to demonstrate recurrent neural network stability~\cite{davydov2022non}, or directly closed-loop stability in continuous learning~\cite{song2022stability} and adaptive control~\cite{tsukamoto2021learning}.
Recently, the stability analyses presented in~\cite{bonassi2021stability,terzi2021learning} showcased the concepts of ISS (Input-to-State Stability) and incremental ISS ($\delta$ISS)~\cite{bayer2013discrete,d2022incremental} in the context of LSTMs and GRUs, which are two of the most popular recurrent neural network models. Inspired by these last results, we proposed~\cite{10458338} the conditions for $\delta$ISS of the recurrent version of GNN, i.e. Gated Graph Neural Networks (GGNN)~\cite{ruiz2020gated}.

In this study, we introduce a continuous graph ordinary differential equation (ODE) network called Liquid-Graph Time-Constant (LGTC) network, inspired by its single-agent counterpart, the Liquid Time Constant (LTC) network~\cite{hasani2021liquid}. The LTC network demonstrates a strong representation learning capability, empowering agents to extrapolate and make inferences even in unfamiliar scenarios. Additionally, its closed-form version~\cite{hasani2022closed} eliminates the need to solve the initial value problem of the ODE, a requirement in other ODE networks~\cite{haber2017stable, chen2018neural, lechner2019designing}. In the existing literature, there are few attempts to make graph ODEs for modelling dynamic network systems~\cite{poli2019graph}, predicting traffic flow~\cite{qin2024learning}, or for sequential recommendations~\cite{su2022graph}. These models typically employ an auto-regressive network over a vector field modelled by a Graph Neural Network (GNN), initializing the internal state with the previous layer's output. In contrast, the proposed LGTC aims to introduce a novel model that learns a dynamic (or ``liquid") time constant, influenced by both input data and the internal state communicated over the agent graph. This dynamic time constant enables each element of the state to capture specific dynamics, thereby enhancing the model's predictive capabilities. The specialized dynamics can be used to reduce the number of the network's internal state (memory) and consequently the number of information each agent needs to communicate. The final aim of this study is to approach model-based algorithms' communication efficiency while increasing prediction accuracy, which in control translates into a higher distributed control efficiency.

In addition, our contribution includes proposing an approximated closed-form for the LGTC system, enhancing the computational efficiency and reducing the communication load. We proceed to evaluate the proposed models alongside GGNN and GraphODE in a flocking control example.
%%%%%%%%%%%%%%%%%%%%%%%%%%%%%%%%%%%%%%%%%%%%%%%%%%%%%%%%%%%%%%%%%%%%%%%%%%%%%%%%
\section{PRELIMINARIES}
\label{Preliminaries}
Let $\mathcal{G}=(\mathcal{V},\,\mathcal{E})$ be an undirected graph where $\mathcal{V} = \{v_1, \dots, v_N\}$ is the vertex set (representing the $N$ agents in the group) and $\mathcal{E}\subseteq \mathcal{V}\times \mathcal{V}$ is the edge set. %and 
%robots able to communicate with a set of neighbors $\mathcal{N}_i \in \mathcal{V}$. They form a graph $\mathcal{G}=(\mathcal{V},\mathcal{E},\mathcal{W})$, where $\mathcal{E}$ is a set of communication edges and 
%$\mathcal{W}:\mathcal{E}\rightarrow{}\mathbb{R}^+$ a function that assigns a positive weight to each edge in $\mathcal{E}$. 
Each edge $e_k=(i,\,j)\in\mathcal{E}$ is associated with a weight $w_{ij}\geq 0$ such that $w_{ij}>0$ if the agent $i$ and $j$ can interact and $w_{ij}=0$ otherwise. 
As usual, we denote with $\mathcal{N}_i=\{j\in\mathcal{V}|\;w_{ij}>0\}$ the set of neighbors of agent $i$.
We also let $\bm{A} \in \mathbb{R}^{N\times N}$ be the adjacency matrix with entries given by the weights $w_{ij}$. Defining the degree matrix $\bm{D} =\diag(d_i)$ with $d_i = \sum_{j\in \mathcal{N}_i} w_{ij}$, the Laplacian matrix of the graph is $\bm{L}=\bm{D}-\bm{A}$.

The graph signal $\bm{x} \in \mathbb{R}^N$, whose $i$-component $x_i$ is assigned to agent $i$, can be processed over the network by the following linear combination rule applied by each agent
\begin{equation}
    \small
    \boldsymbol s_i\bm{x} = \sum_{j\in \mathcal{N}_i} s_{ji} (x_i - x_j),
    \label{eq:aggregation}
\end{equation}
where $\boldsymbol s_i$ is the $i$-th row of $S$. The signal manipulation can be operated by means of any \textit{graph shift operator} $S \subseteq \mathcal{S}$, e.g., Laplacian, adjacency matrix, weighted Laplacian , which respects the sparsity pattern of the graph. Later, in Sect.~\ref{flocking-control-example}, we will use the Laplacian as support matrix, as it is commonly used in distributed control. However, the proposed techniques do not assume the use of a specific support matrix.

Performing $k$ repeated applications of $S$ on the same signal represents the aggregation of the $k$-hop neighbourhood information. In analogy with traditional signal processing, this property can be used to define a linear graph filtering~\cite{P-Shuman2013SPG} that processes the multi features signal $\bm{x} \in \mathbb{R}^{N \times G}$ with $G$ features:
\begin{equation}
\small
    H_{S}(\bm{x}) = \sum_{k=0}^K S^k \bm{x} \bm{H_k}
    \label{eq:filter}
\end{equation}
where the weights $\bm{H_k} \in \mathbb{R}^{G \times F}$ define the output of the filter. Note that $S^k=S(S^{k-1})$, so that it can be computed locally with repeated 1-hop communications between a node and its neighbors. Hence, the computation of $H_{\bm{S}}$ is distributed on each node.
\subsection{Graph Neural Network}
\label{P-GNN}
Although $H_{\bm{S}}$ is simple to evaluate, it can only represent a linear mapping between input and output filters. GNNs increase the expressiveness of the linear graph filters by means of pointwise nonlinearities $\rho : \mathbb{R}^{N \times F_{l-1}} \rightarrow{} \mathbb{R}^{N \times F_{l-1}}$ following a filter bank. Letting $H_{\bm{S}l}$ be a bank of $F_{l-1} \times F_l$ filters at layer $l$, the GNN layer is defined as
\begin{equation}
    \bm{x}_l = \rho(H_{\bm{S}l}(\bm{x}_{l-1})), \qquad  \bm{x}_{l-1} \in \mathbb{R}^{N \times F_{l-1}}.
\end{equation}
Starting by $l=0$ with $F_0$, the signal tensor $\bm{x}_{l_n} \in \mathbb{R}^{N \times F_{l_n}}$ is the output of a cascade of $l_n$ GNN layers. This specific type of GNN is commonly referred to as a convolution graph network because each layer utilizes a graph signal convolution~\eqref{eq:filter}. By the use of imitation learning or similar techniques, the GNN can learn a distributed policy by finding the optimal filter weights $\bm{H_k}$ to propagate information among the agents and generate the desired output. Notably, each agent employs an identical version of the network and exchanges intermediate quantities with the other agents in the network through the application of $S$, resulting in an overall distributed neural network.
GNNs inherit some interesting properties from graph filters, such as permutational equivariance~\cite{gama2020stability} and their local and distributed nature, showing superior ability to process graph signals~\cite{P-Zhou2022GNNPerception,I-GamaRibeiro2020GNNPathPlanning,tolstaya2021multi}.  

\subsection{Gated Graph Neural Network}
\label{P-GGNN}
Recurrent models of GNNs can solve time-dependent problems. These models, similarly to recurrent neural networks (RNNs), are known as graph recurrent neural networks (GRNNs). GRNNs utilize memory to learn patterns in data sequences, where the data is spatially encoded within graphs, regardless of the team size of the agents~\cite{gu2020implicit}. However, traditional GRNNs encounter challenges such as vanishing gradients, which are also found in RNNs. Additionally, they face difficulties in handling long sequences in space, where certain nodes or paths within the graph might be assigned more importance than others in long-range exchanges, causing imbalances in the graph's informational encoding. \\
%Traditional GRNNs encounter challenges such as vanishing gradients, which are also found in RNNs. Additionally, they face difficulties in handling long sequences in space, where certain nodes or paths within the graph might be assigned more importance than others in long-range exchanges, causing imbalances in the graph's information encoding~\cite{lukovnikov2020improving}.
Forgetting factors can be applied to mitigate this problem, reducing the influence of past or new signal on the state. A Gated Graph Neural Network (GGNN)~\cite{ruiz2020gated} is a recurrent Graph Neural Network that uses a gating mechanism to control how the past information influences the update of the GNN states. We can add a state and an input gates, $\bm{\hat{q}}, \bm{\tilde{q}} \in Q \subseteq [0,1]^{N \times F}$, that are multiplied via the Hadamard product~$\circ$ by the state and the inputs of the network, respectively. These two gates regulate how much the past information and the input are used to update the network's internal state. GGNNs admit the following state-space representation~\cite{lukovnikov2020improving},
\begin{equation}
\small
    \begin{cases}
    \tilde{\bm{q}} = \sigma(\tilde{A}_S(\bm{x}) + \tilde{B}_S(\bm{u}) +\hat{b}) \\
    \hat{\bm{q}} = \sigma(\hat{A}_S(\bm{x}) + \hat{B}_S(\bm{u}) +\tilde{b})  \\
    \bm{x}^+ = \sigma_c(\bm{\hat{q}}\circ A_S(\bm{x}) + \bm{\tilde{q}} \circ B_S(\bm{u}) +b)
    \end{cases}
    \label{system}
\end{equation}
with $\sigma(x) = \frac{1}{1+e^{-x}}$ being the logistic function, and $\sigma_c(x) = \frac{e^{x}-e^{-x}}{e^{x}+e^{-x}}$ being the hyperbolic tangent. $\hat{A}_S,\hat{B}_S$ are graph filters of the forgetting gate, $\tilde{A}_S,\tilde{B}_S$ are graph filters~\eqref{eq:filter} of the input gate, and $A_S$ and $B_S$ are the state graph filters~\eqref{eq:filter}. $\bm{\hat{b}},\bm{\tilde{b}},\bm{b} \in \mathbb{R}^{N \times F}$ are respectively the biases of the gates and the state built as $\bm{1}_N \otimes \bm{\textit{b}}$ with the same bias for every agents. %The weights and biases of the graph filters can be acquired using imitation learning techniques~\cite{ross2011reduction}
We consider the system under the following assumption
\begin{assumption}
	The input $\bm{u}$ is unity-bounded: $\bm{u} \in \mathcal{U} \subseteq [-1,1]^{N \times G}$ , i.e. $||\bm{u}||_{\infty} \leq 1$.
	\label{assumption1}
\end{assumption}
\noindent We identify the induced $\infty$-norm as $||\cdot||_{\infty}$. We used the following notation for the filters in the system~\eqref{system}:
\begin{equation}
	\small
	\begin{aligned}
		& \qquad S_{I,K} \triangleq [I, S,\dots, S^K ] \\ 
		A_{0,K}   \triangleq [A_0, \dots, A_K ]^T & \qquad
		B_{0,K}  \triangleq [B_0, \dots, B_K ]^T \\
		\tilde{A}_{0,K} \triangleq [\tilde{A}_0, \dots, \tilde{A}_K]^T & \qquad
		\hat{A}_{0,K}  \triangleq [\hat{A}_0, \dots, \hat{A}_K]^T  \\
		\tilde{B}_{0,K}  \triangleq [\tilde{B}_0, \dots, \tilde{B}_K ]^T & \qquad
		\hat{B}_{0,K} \triangleq [\hat{B}_0, \dots, \hat{B}_K ]^T
	\end{aligned}
	\label{eq:definitions}
\end{equation}
where $K$ is the filters length. Then, in light of assumption~\ref{assumption1} and knowing that $ ||\bm{x}||_{\infty}\leq 1$,  each gate feature $q_i$ satisfies:
\begin{equation}
	\small
	\begin{aligned}
		|\hat{q}_i| & \leq 
		\sigma( ||S_{I,K}||_{\infty}(|| \hat{A}_{0,K} ||_{\infty}+ ||\hat{B}_{0,K}||_{\infty})+||\hat{\bm{b}}||_\infty) \triangleq \sigma_{\hat{q}}.
	\end{aligned}
	\label{eq:gates}
\end{equation}

\begin{assumption}
	Given any two support matrices $||S_{1}(t)||_{\infty}$ and $||S_{2}(t)||_{\infty}$, $\forall t \in \mathbb{R}^{+}$ associated with two different graphs, they are bounded by the same $||\bar{S}||_{\infty}$; moreover, they are lower bounded by $||\tilde{S}||_{\infty}$.
    \label{assumption2}
\end{assumption}
\noindent In our previous work~\cite{10458338}, we derived the following stability condition
\begin{theorem}
	\label{dISS_stab}
	Under Assumptions~\ref{assumption1} and~\ref{assumption2}, a sufficient condition for the system~\eqref{system} to be $\delta$ISS is $\mathcal{A}_{\delta} \leq 1$; where
	\small
	\begin{flalign}
		\begin{aligned}
			\mathcal{A}_{\delta} & \triangleq \sigma_{\hat{q}}||\bar{S}_{I,K}||_\infty ||A_{0,K}||_{\infty}+\frac{1}{4}||\bar{S}_{I,K}||^2_\infty||\hat{A}_{0,K}||_{\infty} ||A_{0,K}||_{\infty} \\ & + \frac{1}{4} ||\bar{S}_{I,K}||^2_\infty||\tilde{A}_{0,K}||_{\infty} ||B_{0,K}||_{\infty}.
		\end{aligned}
	\end{flalign}
	\normalsize
    Where $\bar{S}_{I,K}$ is the defined as in eq. \eqref{eq:definitions}
\end{theorem}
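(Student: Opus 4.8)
The plan is to follow the incremental-ISS template used for gated recurrent networks in~\cite{bonassi2021stability,terzi2021learning} and, in the graph case, in~\cite{10458338}: take two trajectories $\bm{x}_a,\bm{x}_b$ of~\eqref{system} evolving over the same (possibly time-varying) support and driven by inputs $\bm{u}_a,\bm{u}_b$, set $\Delta\bm{x}=\bm{x}_a-\bm{x}_b$ and $\Delta\bm{u}=\bm{u}_a-\bm{u}_b$, and derive a one-step estimate of the form $\|\Delta\bm{x}^+\|_\infty\le\mathcal{A}_\delta\|\Delta\bm{x}\|_\infty+\mathcal{B}_\delta\|\Delta\bm{u}\|_\infty$ in the induced $\infty$-norm. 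The first ingredient is an incremental Lipschitz bound for the graph filters~\eqref{eq:filter}: writing $H_S(\bm{z})=S_{I,K}\,(I\otimes\bm{z})\,H_{0,K}$ as the product of the block row $S_{I,K}$, the block-diagonal copy of the signal, and the block column of weights $H_{0,K}$, submultiplicativity of induced norms gives $\|H_S(\bm{z}_a)-H_S(\bm{z}_b)\|_\infty\le\|S_{I,K}\|_\infty\|H_{0,K}\|_\infty\|\bm{z}_a-\bm{z}_b\|_\infty$, and Assumption~\ref{assumption2} lets me replace every $\|S^k(t)\|_\infty$ by the uniform bound packaged in $\|\bar{S}_{I,K}\|_\infty$. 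I would also record the scalar facts that $\sigma$ is $\tfrac{1}{4}$-Lipschitz with range in $(0,1)$, that $\sigma_c$ is $1$-Lipschitz with range in $(-1,1)$ — so that $[-1,1]^{N\times F}$ is forward invariant and $\|\bm{x}\|_\infty\le 1$ may be used throughout — and that $|\hat q_i|,|\tilde q_i|\le\sigma_{\hat q}$ (and an analogous $\sigma_{\tilde q}$) via~\eqref{eq:gates}.

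Next I would bound the gate increments. Combining the $\tfrac{1}{4}$-Lipschitz property of $\sigma$ with the filter bound above yields
\[
\|\Delta\hat{\bm{q}}\|_\infty\le\tfrac{1}{4}\|\bar{S}_{I,K}\|_\infty\big(\|\hat{A}_{0,K}\|_\infty\|\Delta\bm{x}\|_\infty+\|\hat{B}_{0,K}\|_\infty\|\Delta\bm{u}\|_\infty\big),
\]
and the analogous bound for $\|\Delta\tilde{\bm{q}}\|_\infty$ with $\tilde{A}_{0,K},\tilde{B}_{0,K}$. I then attack the state update by splitting each Hadamard term with an add-and-subtract trick,
\[
\hat{\bm{q}}_a\circ A_S(\bm{x}_a)-\hat{\bm{q}}_b\circ A_S(\bm{x}_b)=\hat{\bm{q}}_a\circ\big(A_S(\bm{x}_a)-A_S(\bm{x}_b)\big)+(\hat{\bm{q}}_a-\hat{\bm{q}}_b)\circ A_S(\bm{x}_b),
\]
and likewise for the input term with $\tilde{\bm{q}}$ and $B_S$. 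Using $\|\bm{a}\circ\bm{b}\|_\infty\le\|\bm{a}\|_\infty\|\bm{b}\|_\infty$, the bound $\sigma_{\hat q}$ on the leading factor of the first summand, the gate-increment bounds on the second summand, the a priori estimates $\|A_S(\bm{x}_b)\|_\infty\le\|\bar{S}_{I,K}\|_\infty\|A_{0,K}\|_\infty$ and $\|B_S(\bm{u}_b)\|_\infty\le\|\bar{S}_{I,K}\|_\infty\|B_{0,K}\|_\infty$ (valid because $\|\bm{x}_b\|_\infty\le 1$ and, by Assumption~\ref{assumption1}, $\|\bm{u}_b\|_\infty\le 1$), and finally the $1$-Lipschitz property of $\sigma_c$, the terms that multiply $\|\Delta\bm{x}\|_\infty$ collect exactly into $\sigma_{\hat q}\|\bar{S}_{I,K}\|_\infty\|A_{0,K}\|_\infty+\tfrac{1}{4}\|\bar{S}_{I,K}\|_\infty^2\|\hat{A}_{0,K}\|_\infty\|A_{0,K}\|_\infty+\tfrac{1}{4}\|\bar{S}_{I,K}\|_\infty^2\|\tilde{A}_{0,K}\|_\infty\|B_{0,K}\|_\infty=\mathcal{A}_\delta$, while the leftover terms proportional to $\|\Delta\bm{u}\|_\infty$ define the input gain $\mathcal{B}_\delta$.

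To close the argument, from $\|\Delta\bm{x}(k{+}1)\|_\infty\le\mathcal{A}_\delta\|\Delta\bm{x}(k)\|_\infty+\mathcal{B}_\delta\|\Delta\bm{u}(k)\|_\infty$ I would, under $\mathcal{A}_\delta\le 1$ (strictly $<1$ if one also wants a finite asymptotic gain), unroll the recursion to get $\|\Delta\bm{x}(k)\|_\infty\le\mathcal{A}_\delta^{\,k}\|\Delta\bm{x}(0)\|_\infty+\tfrac{\mathcal{B}_\delta}{1-\mathcal{A}_\delta}\sup_{0\le t<k}\|\Delta\bm{u}(t)\|_\infty$, which is precisely the $\mathcal{KL}+\mathcal{K}_\infty$ bound in the definition of $\delta$ISS; equivalently $V(\Delta\bm{x})=\|\Delta\bm{x}\|_\infty$ serves as a $\delta$ISS-Lyapunov function. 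The step I expect to be the main obstacle is the bookkeeping on the gated (Hadamard) terms: because $\hat{\bm{q}},\tilde{\bm{q}}$ themselves depend on the state, it is the cross terms $(\hat{\bm{q}}_a-\hat{\bm{q}}_b)\circ A_S(\bm{x}_b)$ and $(\tilde{\bm{q}}_a-\tilde{\bm{q}}_b)\circ B_S(\bm{u}_b)$ that generate the quadratic-in-$\|\bar{S}_{I,K}\|_\infty$ contributions, and one must be careful to apply the a priori bounds $\|\bm{x}_b\|_\infty\le 1$, $\|\bm{u}_b\|_\infty\le 1$ only to the ``already computed'' trajectory $b$ (never to the increments), so that no term is double-counted and the coefficient that emerges is exactly $\mathcal{A}_\delta$ rather than something larger.
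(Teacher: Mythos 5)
Your proposal is correct and is essentially the intended argument: the paper itself omits the proof of Theorem~\ref{dISS_stab} (deferring to~\cite{10458338}), but the standard incremental one-step estimate you set up — the add-and-subtract split of the gated Hadamard terms, the $\tfrac{1}{4}$-Lipschitz bound on $\sigma$ for the gate increments, and the a priori bounds $\|\bm{x}\|_\infty\le 1$, $\|\bm{u}\|_\infty\le 1$ on the non-incremented factors — reproduces exactly the three summands of $\mathcal{A}_\delta$ and closes via the geometric recursion. Your side remark that strict contraction (hence a finite $\delta$ISS gain $\mathcal{B}_\delta/(1-\mathcal{A}_\delta)$) really requires $\mathcal{A}_\delta<1$ rather than the stated $\mathcal{A}_\delta\le 1$ is also well taken.
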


\begin{remark}
	\label{remark-nL}
	In practice, Assumption~\ref{assumption2} is met by limiting the cardinality of $\# \mathcal{N}_i < N$ for every agent $i$ in the team. For normalized Laplacian, the assumption is met without any further restrictions on the graph topology. 
\end{remark}

\section{LIQUID-GRAPH TIME CONSTANT NETWORK}
\label{sec:lgtc}
In light of the recent findings on the continuous time neural ODE, we want to propose a novel continuous-time graph neural network: Liquid-Graph Time-Constant (LGTC) network. Inspired by the LTC network proposed in~\cite{hasani2021liquid}, the LGTC is described by the following ODE:
\begin{equation}
\small
    \begin{cases}
	\bm{f} = \rho(\hat{A}_S(\bm{x})+\bm{b_x}) + \rho(\hat{B}_S(\bm{u})+\bm{b_u}) \\
	\dot{\bm{x}} = -(\bm{b} + \bm{f}) \circ \bm{x} - \sum_{k=1}^K \bm{S}^k \bm{x} \bm{A_k}. +\bm{f} \circ \sigma_c(B_{S}(\bm{u}))	
    \end{cases}
\label{eq:sys-lgtc}
\end{equation}
where {\small $\rho(x)=ReLU(x)$}. Like in GGNN system, \small $\hat{A}_S,\hat{B}_S, A_S, B_S$ \normalsize are graph filters~\eqref{eq:filter} and \small $\bm{b_x},\bm{b_u}, \bm{b} \in \mathbb{R}^{N \times F}$ \normalsize are biases of the model. This model disposes of an input-state-dependent varying time-constant allowing single elements of the hidden state to identify specialized dynamical systems for input features arriving at each time step.

In the following results on the system, we used the vector operator $\bm{X}_{|}$ that rearranges the elements of matrix $\bm{X}$ in a vector. For the system in~\eqref{eq:sys-lgtc}, the following lemma holds
\begin{lemma}
	\label{state_bound}
		If $b_i>0$ and {\small $\bm{A_k}^T \otimes \bm{S}^k \geq 0$}, for $k=[1,\dots,K]$, the state feature $i$ at time $t$ is bounded in the range $[-1,1]$ if \small $\bm{x(0)} \in \mathcal{X} \subseteq[-1,1]^{N \times F}$ \normalsize. 
\end{lemma}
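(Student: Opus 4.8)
The statement is the positive invariance of the hyperrectangle $\mathcal{X}=[-1,1]^{N\times F}$ under the ODE~\eqref{eq:sys-lgtc}, so the plan is to prove it by a Nagumo-type sub-tangentiality argument carried out facet by facet (equivalently, by inspecting the first time the box boundary is reached and deriving a contradiction). First I would put~\eqref{eq:sys-lgtc} in vectorized form. Writing $\bm{z}=\bm{x}_{|}$, $\bm{\phi}=\bm{f}_{|}$, $\bm{v}=(\sigma_c(B_S(\bm{u})))_{|}$, $\bm{\beta}=\bm{b}_{|}$, $\bm{M}=\sum_{k=1}^{K}\bm{A_k}^{T}\otimes\bm{S}^{k}$, and using $(\sum_{k}\bm{S}^{k}\bm{x}\bm{A_k})_{|}=\bm{M}\bm{z}$, the dynamics read
\[
\dot{\bm{z}}\;=\;-\diag(\bm{\beta}+\bm{\phi})\,\bm{z}\;-\;\bm{M}\,\bm{z}\;+\;\bm{\phi}\circ\bm{v}.
\]
Three elementary facts set the stage: $\bm{\phi}\ge\bm{0}$ componentwise, since $\rho=\mathrm{ReLU}$ and $\bm{f}$ is a sum of two ReLU outputs; $\bm{v}\in[-1,1]^{NF}$, since $\sigma_c=\tanh$ is valued in $(-1,1)$; and $\bm{\beta}>\bm{0}$ by the hypothesis $b_i>0$. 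Moreover $\bm{A_k}^{T}\otimes\bm{S}^{k}\ge 0$ says precisely that $\bm{M}$ is entrywise nonnegative. The right-hand side is locally Lipschitz (ReLU is globally Lipschitz, $\tanh$ is smooth, the coupling terms are linear in $\bm{z}$), hence solutions exist and are unique locally, and once $\mathcal{X}$ is shown forward invariant the solution is automatically global.

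By Nagumo's theorem it suffices to verify sub-tangentiality on $\partial\mathcal{X}$: for each scalar coordinate $m$, $\dot{z}_m\le 0$ whenever $z_m=1$ and $\bm{z}\in\mathcal{X}$, and $\dot{z}_m\ge 0$ whenever $z_m=-1$ and $\bm{z}\in\mathcal{X}$. On the facet $z_m=1$ the diagonal term equals $-(\beta_m+\phi_m)$ and the input term obeys $\phi_m v_m\le\phi_m$ (because $\phi_m\ge 0$ and $v_m\le 1$), so together they are at most $-\beta_m-\phi_m(1-v_m)\le-\beta_m<0$, leaving
\[
\dot{z}_m\;\le\;-\beta_m\;-\;(\bm{M}\bm{z})_m.
\]
The mirror-image computation on the facet $z_m=-1$ gives $\dot{z}_m\ge \beta_m-(\bm{M}\bm{z})_m$. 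Everything thus reduces to controlling the single coupling term $(\bm{M}\bm{z})_m$ on the relevant facet, and this is exactly where nonnegativity of $\bm{M}$ (the hypothesis $\bm{A_k}^{T}\otimes\bm{S}^{k}\ge 0$) is used, together with $\bm{z}\ge-\bm{1}$ on the facet $z_m=1$ and $\bm{z}\le\bm{1}$ on the facet $z_m=-1$, to keep that term within the strictly signed margin set by $\mp\beta_m$; hence the field is sub-tangent on every facet.

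I expect this last point to be the main obstacle: the graph-coupling term is the only contribution whose effect on an individual coordinate is not transparent, because on a boundary facet only $z_m$ is pinned while the remaining $NF-1$ coordinates range over all of $[-1,1]$, and one must argue that the aggregated contribution $-(\bm{M}\bm{z})_m$ cannot overcome the margin $-\beta_m$. This is the structural role of the sign assumption on $\bm{A_k}^{T}\otimes\bm{S}^{k}$, and where the careful bookkeeping lives. With sub-tangentiality established on all $2NF$ facets, Nagumo's theorem gives positive invariance of $\mathcal{X}$: $\bm{x}(0)\in\mathcal{X}\subseteq[-1,1]^{N\times F}$ forces $\bm{x}(t)\in[-1,1]^{N\times F}$ for all $t\ge 0$, so in particular each state feature $x_i(t)$ stays in $[-1,1]$; this also rules out finite-time escape, making the solution global.
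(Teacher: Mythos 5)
Your overall strategy (facet-by-facet sub-tangentiality / Nagumo on the box $[-1,1]^{NF}$) is a legitimate way to prove forward invariance, and your treatment of the diagonal and input terms is correct: on the facet $z_m=1$ you correctly get the margin $-\beta_m-\phi_m(1-v_m)\le-\beta_m<0$ from $b_i>0$, $\phi_m\ge 0$ (ReLU), and $v_m\in(-1,1)$. But the proof has a genuine gap exactly at the point you flag as "where the careful bookkeeping lives": you never actually bound the coupling term, and under your reading of the hypothesis it cannot be bounded. If $\bm{M}=\sum_{k=1}^{K}\bm{A_k}^{T}\otimes\bm{S}^{k}$ is merely \emph{entrywise} nonnegative, then on the facet $z_m=1$ the other coordinates are only constrained to $[-1,1]$, so the best available estimate is $-(\bm{M}\bm{z})_m\le -M_{mm}+\sum_{j\ne m}M_{mj}$, which is positive (and arbitrarily large) whenever the off-diagonal row mass exceeds the diagonal entry. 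The fixed margin $-\beta_m$ cannot absorb it, so sub-tangentiality genuinely fails on that facet for some admissible boundary points. To close the argument along your route you would need an additional quantitative hypothesis such as row diagonal dominance of $\bm{M}$ (equivalently $\mu_{\infty}(-\bm{M})\le 0$), or $\beta_m\ge\sum_{j\ne m}M_{mj}-M_{mm}$; neither is what the lemma assumes.

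The paper avoids this issue by reading $\sum_{k}\bm{A_k}^{T}\otimes\bm{S}^{k}\ge 0$ as a statement about the quadratic form: it uses the Lyapunov function $V=\tfrac12\bm{x}_{|}^{T}\bm{x}_{|}$, so the coupling contributes $-\bm{x}_{|}^{T}\bm{M}\bm{x}_{|}\le 0$ \emph{globally} and is discarded in one stroke, after which only the diagonal terms $-b_ix_i^2-f_ix_i^2+f_ix_i\sigma_{u,i}$ remain and decouple coordinate-wise. In other words, the sign hypothesis is exploited as positive semidefiniteness of the aggregate operator, not as entrywise nonnegativity of its rows, and that is precisely the structural ingredient your facet argument is missing. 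As written, your proposal identifies the obstacle correctly but does not resolve it, so it does not constitute a complete proof.
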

\proof
	Defining the following Lypunov function 
	\begin{equation*}
		\small
		V = \frac{1}{2}\bm{x_{|}}^T\bm{x_{|}}
	\end{equation*}
	Its time derivative along the system trajectories is
	\small 
	\begin{equation*}
		\begin{split}
		\dot{V} &= -\bm{x}_{|}^T \diag(\bm{b}_{|}) \bm{x}_{|} -\bm{x}_{|}^T \diag(\bm{f}_{|}) \bm{x}_{|} \\ & -\bm{x}_{|}^T \sum_{k=1}^K \bm{A_k}^T \otimes \bm{S}^k \bm{x}_{|} +\bm{x}_{|}^T \diag(\bm{f}_{|})\sigma_c(B_{S}(\bm{u}))_{|}
		\end{split}
	\end{equation*}
	\normalsize
	By imposing $b_i\geq0$ and \small $\sum_{k=1}^K \bm{A_k}^T \otimes \bm{S}^k \geq 0$ \normalsize for \small $k=[1,\dots,K]$, \normalsize we have
	\begin{equation*}
        \small
		\dot{V} < -\bm{x}_{|}^T \diag(\bm{f}_{|}) \bm{x}_{|} +\bm{x}_{|}^T \diag(\bm{f}_{|})\sigma_c(B_{S}(\bm{u}))_{|}
    \end{equation*}
	 When {\small$x_i>1$}, {\small $\dot{V_i}<0$}, because the term {\small $\bm{x}_{|}^T\diag(\bm{f}_{|}) \sigma_c(B_{S}(\bm{u}))_{|} \leq \bm{1}^T\diag(\bm{f}_{|})\bm{1}$} and reminding that $f_i>0$. The proof is completed by noting that {\small $\dot{V_i}<0$} when {\small $x_i<-1$} with a similar reasoning.  
\endproof

\noindent We denote the vector field of the dynamic system~\eqref{eq:sys-lgtc} with \small $F(x,u,S,t): \mathcal{X} \times \mathcal{U} \times \mathcal{S} \times \mathbb{R}_{\geq 0} \rightarrow \mathbb{R}^{N\times F}$ \normalsize and its jacobian with \small $D_xF = \frac{\partial F_{|}}{\partial \bm{x}_{|}}$. \normalsize Giving the induced infinite log-norm  definition \small$\mu_{\infty}(\bm{X}) = \max_{i}(x_{ii}+\sum_{j=1,j\neq 1}^{n}|x_{ij}|)$~\cite{strom1975logarithmic}\normalsize, the followings results are known~\cite{davydov2022non}:

\begin{definition}
	\label{contraction}
	The vector field $F$ is \textit{strongly infinitesimally contracting} if:
	\begin{equation}
		\small
		\mu_{\infty}(D_xF) < -c
		\label{eq:contraction}
	\end{equation}
	 With $c>0$ known as contraction rate.
\end{definition}

\begin{definition}
	\label{diss-contractive}
	For a vector map $F$ satisfying~\ref{contraction} and given $l_u,\,l_S$ the Lipschitz constants of $F$ in the input $\bm{u}$ and $\bm{S}$,  any two solutions $\bm{x_1}$ and $\bm{x_2}$ with initial conditions $\bm{x_1(0)}$, $\bm{x_2(0)}$ and inputs $\bm{u_1(t)}$, $\bm{u_2(t)}$ satisfy the $\delta$ISS relationship
	\begin{equation}
		\small
		\begin{split}
		|| \bm{x_1(t)} - \bm{x_2(t)} ||_{\infty} &\leq e^{-ct} ||\bm{x_1(0)} - \bm{x_2(0)} ||_{\infty} \\+& \frac{l_u}{c}(1-e^{-ct}) sup_{\tau \in [0,t]}|| \bm{u}_1(\tau) -\bm{u}_2(\tau) ||_{\infty} \\ + & \frac{l_S}{c}(1-e^{-ct}) sup_{\tau \in [0,t]}|| S_1(\tau) - S_2(\tau) ||_{\infty}
	 	\end{split}
	\end{equation} 
\end{definition}

\noindent For the neural network~\eqref{eq:sys-lgtc}, we provide the following theorem
\begin{theorem}
	\label{diss-lgtc}
	Under Assumption~\ref{assumption2}, with $\bm{x(0)} \in \mathcal{X}$, system~\eqref{eq:sys-lgtc} is $\delta$ISS if the following constraints are satisfied
	\begin{equation}
		\small
			c \geq 0; \quad  \bm{b}\geq 0; \quad  \mu_{\infty}(\sum_{k=1}^K \bm{A_k}^T \otimes \bm{S}^k) \geq 0 
	\end{equation}
	\begin{equation}
		\small
		\begin{split}
			\text{with} \qquad c &= ||\bm{b}||_{\infty} +||A_{1,K}||_{\infty}||\tilde{S}_{1,K}||_{\infty} \\ & +||\bm{b_x}||_{\infty} - ||\hat{A}_{0,K}^T||_{\infty}||\bar{S}_{0,K}||_{\infty}	
		\end{split}.
	\end{equation}
\end{theorem}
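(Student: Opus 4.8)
The plan is to place the statement inside the non-Euclidean contraction framework recalled in Definitions~\ref{contraction} and~\ref{diss-contractive}: it is enough to exhibit a constant $c>0$ --- the one in the statement --- such that the Jacobian of the vectorised vector field satisfies $\mu_\infty(D_xF)\le -c$ uniformly over $\bm x\in\mathcal{X}$, $\bm u\in\mathcal{U}$ and $S\in\mathcal{S}$, and then to check that $F$ is globally Lipschitz in $\bm u$ and in $S$ so that the $\delta$ISS estimate of Definition~\ref{diss-contractive} applies verbatim. A preliminary step is to invoke Lemma~\ref{state_bound}: the hypotheses $\bm b\ge 0$ and $\mu_\infty(\sum_{k=1}^K \bm A_k^T\otimes\bm S^k)\ge 0$ are what keep the box $[-1,1]^{N\times F}$ forward invariant from $\bm x(0)\in\mathcal{X}$, so throughout the estimate one may use $\|\bm x\|_\infty\le 1$.

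Next I would vectorise~\eqref{eq:sys-lgtc} and differentiate. With $\rho=\mathrm{ReLU}$, the generalised Jacobian of the ``liquid'' term is $\partial\bm f_{|}/\partial\bm x_{|}=\Lambda\,\widehat{\mathcal A}$, where $\widehat{\mathcal A}=\sum_{k=0}^K \hat{\bm A}_k^{T}\otimes \bm S^k$ is the vectorised filter $\hat A_S$ and $\Lambda$ is diagonal with entries in $[0,1]$ (the ReLU sub-derivatives). Collecting the three contributions of the right-hand side of~\eqref{eq:sys-lgtc} gives
\[
D_xF=-\diag(\bm b_{|}+\bm f_{|})-\sum_{k=1}^K \bm A_k^{T}\otimes\bm S^k+\bigl(\diag(\sigma_c(B_S(\bm u))_{|})-\diag(\bm x_{|})\bigr)\,\Lambda\,\widehat{\mathcal A}.
\]
I would then bound $\mu_\infty(D_xF)$ block by block, using subadditivity of the log-norm, $\mu_\infty(\diag(d))=\max_i d_i$, $\mu_\infty(M)\le\|M\|_\infty$, submultiplicativity of $\|\cdot\|_\infty$, the bounds $\bm f\ge 0$, $|\sigma_c(\cdot)|\le 1$ and $\|\bm x\|_\infty\le 1$, and Assumption~\ref{assumption2} to replace the support powers by the common upper bound $\|\bar S\|_\infty$ (and, where a positive diagonal is pulled out with a favourable sign, the lower bound $\|\tilde S\|_\infty$). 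The decay block and the way $\bm f$ --- hence the ReLU branch carrying $\bm b_x$ --- enters it yield the $-\|\bm b\|_\infty$ and $-\|\bm b_x\|_\infty$ terms, the graph-coupling block $-\sum_{k=1}^K\bm A_k^T\otimes\bm S^k$ yields $-\|A_{1,K}\|_\infty\|\tilde S_{1,K}\|_\infty$, and the residual coupling block is bounded in norm by $\|\hat A_{0,K}^{T}\|_\infty\|\bar S_{0,K}\|_\infty$; summing the four contributions gives $\mu_\infty(D_xF)\le -c$ with $c$ exactly as displayed. Substituting this $c$ into Definition~\ref{diss-contractive}, with $l_u,l_S$ finite because $\sigma_c$ and $\rho$ are globally $1$-Lipschitz and all remaining factors are bounded, delivers the claimed $\delta$ISS inequality.

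I expect the main obstacle to be the sign bookkeeping inside $\mu_\infty$. Because $\mu_\infty(-M)\neq-\mu_\infty(M)$ in general, turning the graph-coupling block and the bias blocks into \emph{negative} contributions to $\mu_\infty(D_xF)$ --- equivalently, \emph{positive} contributions to the contraction rate --- is exactly what forces the monotonicity/positivity conditions $\bm b\ge0$ and $\mu_\infty(\sum_{k=1}^K\bm A_k^T\otimes\bm S^k)\ge0$, and matching every term of $c$ on the nose requires being careful about which norm estimate is applied where and about the state of each ReLU activation as it propagates from $\bm f$ into both the diagonal and the coupling blocks. A secondary technical point is the non-smoothness of $\rho$: $D_xF$ must be read as a Clarke generalised Jacobian (in the spirit of~\cite{davydov2022non}) and the log-norm bound checked on every element of it; since the estimate only uses that $\Lambda$ is diagonal with entries in $[0,1]$, it is uniform over all linear pieces, which closes the argument.
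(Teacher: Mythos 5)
Your proposal follows essentially the same route as the paper's proof: the same vectorised Jacobian decomposition into the diagonal decay block, the graph-coupling block and the gated coupling block, the same use of Lemma~\ref{state_bound} to confine $\bm{x}$ to $[-1,1]^{N\times F}$, the same log-norm subadditivity and norm bounds under Assumption~\ref{assumption2} to reach $\mu_\infty(D_xF)\le -c$, and the same appeal to Definition~\ref{diss-contractive} with finite $l_u,l_S$. Your explicit treatment of the ReLU non-smoothness via the Clarke generalised Jacobian is a welcome refinement the paper leaves implicit, but it does not change the argument.
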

\proof
Let be {\small $\sigma_{u}= \sigma_c(B_S(\bm{u}))\subseteq[-1,1]^{N \times F}$}, the log-norm of the jacobian $\mu_{\infty}(D_xF)$ has its superior in:
\begin{equation*}
	\small
	\begin{split}
		\sup\limits_{x} \mu_{\infty}(D_xF) = & \sup\limits_{x}  \mu_{\infty}(-\diag(\bm{b_{|}})-\diag(\bm{f}_{|})\\ -\sum_{k=1}^{K} \bm{A_k}^T \otimes \bm{S}^k + \diag(&\sigma_{u|}- \bm{x}_{|})\diag(D_x\bm{f}) \sum_{k=0}^{K} \bm{\hat{A}_k}^T \otimes \bm{S}^k ) \\
		\leq -||\bm{b}||_{\infty}+\mu_{\infty} (&- \sum_{k=1}^{K} \bm{A_k}^T \otimes \bm{S}^k ) + \sup\limits_{x} [-||\bm{f}_{|}||_{\infty} \\ + \mu_{\infty}( \diag(&\sigma_{u|}-\bm{x}_{|} )\diag(D_x\bm{f})\sum_{k=0}^{K} \bm{\hat{A}_k}^T \otimes \bm{S}^k)] 
	\end{split}
\end{equation*}
where \small$D_x\bm{f} \in [0,1]^{NF}$ \normalsize is the derivative of \small$\bm{f}$ \normalsize in \small $\bm{x}_{|}$\normalsize. The last inequality follows by \small $\mu_{\infty}(\bm{X}) = ||\bm{X}||_{\infty}$ \normalsize when \small$\bm{X}>0$ \normalsize and by the fact that {\small $\bm{b}_{|} \geq 0$} and {\small $\bm{f}_{|}\geq0$}. Moreover, we can note that \small $ -|| \bm{f}_{|} ||_{\infty} \leq - ||\rho(\hat{A}_S(\bm{x})+\bm{b_x})_{|}||_{\infty}$\normalsize. Giving {\small$\sigma_{u|} \in [-1,1]^{NF}$} and {\small$-\mu_{\infty}(\bm{A_k}^T \otimes \bm{S}^k ) \leq 0$}, by lemma~\ref{state_bound}, {\small $\bm{x}_{|}\in [-1,1]^{NF}$}. Therefore, we get 
\begin{equation*}
	\small
	\begin{split}
		\mu_{\infty}(D_xF) \leq & -||\bm{b}||_{\infty}-||\sum_{k=1}^{K} \bm{A_k}^T \otimes \bm{S}^k||_{\infty} - \\  ||\rho(\hat{A}_S(\bm{x}+\bm{b_x}))&||_{\infty}+\max\limits_{d \in [-1,1]^{NF}} 2 \mu_{\infty}(\diag(d) \sum_{k=0}^{K} \bm{\hat{A}_k}^T \otimes \bm{S}^k) \\
		\leq -||\bm{b}||_{\infty} - ||&\sum_{k=1}^{K} \bm{A_k}^T \otimes \bm{S}^k||_{\infty} - ||\rho(\hat{A}_S(\bm{x}+\bm{b_x}))||_{\infty} + \\ 2 \max ( \mu_{\infty}(&-\sum_{k=0}^{K} \bm{\hat{A}_k}^T \otimes \bm{S}^k),\mu_{\infty}(\sum_{k=0}^{K} \bm{\hat{A}_k}^T \otimes \bm{S}^k))
	\end{split}
\end{equation*}
To further reduce the previous inequality, we can use the sub multiplicity property of the infinite norm and the fact that the last term in the previous inequality is {\small $ \leq ||\sum_{k=1}^{K} \bm{\hat{A}_k}^T \otimes \bm{S}^k||_{\infty}$} to derive
\begin{equation*}
	\small
	\begin{split}
		 \mu_{\infty}(D_xF) \leq & -||\bm{b}||_{\infty} - ||A_{1,K}||_{\infty}||\tilde{S}_{1,K}||_{\infty} \\ & -||\bm{b_x}||_{\infty} +   ||\hat{A}_{0,K}^T||_{\infty}||\bar{S}_{0,K}||_{\infty} = - c
	\end{split}
\end{equation*}
From the statement in the Theorem~\ref{diss-lgtc}, the system is contractive under the defintion \ref{contraction} and $\delta$ISS under the definition \ref{diss-contractive} with
\begin{equation}
	\small
	\begin{split}
		l_u = & (2||\hat{B}_{0,K}||_{\infty} +(||\hat{A}_{0,K}||_{\infty}||\bar{S}_{0,K}||_{\infty} + ||\hat{B}_{0,K}||_{\infty}||\bar{S}_{0,K}||_{\infty} \\ &+ ||\bm{b_x}||_{\infty} +||\bm{b_u}||_{\infty})||B_{0,K}||_{\infty} )||\bar{S}_{0,K}||_{\infty} \\
		l_S =&\binom{k+1}{2}(((||\hat{A}_{0,K}||_{\infty}||\bar{S}_{0,K}||_{\infty} +  ||\hat{B}_{0,K}||_{\infty}||\bar{S}_{0,K}||_{\infty} \\ & + ||\bm{b_x}||_{\infty} + ||\bm{b_u}||_{\infty})||B_{1,K-1}||_{\infty}+2||\hat{A}_{1,K-1}||_{\infty}\\ & + 2||\hat{B}_{1,K-1}||_{\infty})||\bar{S}_{1,K-1}||_{\infty} - ||\tilde{S}_{1,K-1}||_{\infty}||A_{1,K-1}||_{\infty}) 
	\end{split}
\end{equation}
\noindent We omitted the derivation of {\small$l_u$} and {\small$l_S$} which can be found by computing {\small $\sup_{\bm{u}}||D_uF||_{\infty}, \sup_{\bm{S}}||D_sF||_{\infty}$}, respectively. 
\endproof
The higher expressivity of the LTC model has been proved in~\cite{hasani2021liquid} via principal component analysis on trajectory depth. The results of this analysis showed that LTC performs at worst like a discrete RNN model over short trajectories, but its expressive power does not decrease on longer trajectories. We can derive the same conclusion for LGTC, which shares with LTC the main components but over a network of agents. This higher expressivity allows us to reduce the communicated state variables. Therefore, we can select a subset of state features {\small $ F' << F$} and a subset of input features {\small $ G' << G$} to communicate. Denoting by {\small $\bm{x_{F'}}$} the features to communicate and by {\small $\bm{x_{F-F'}}$} the features to not communicate, we can encode the reduced communication in the equations above by changing the filter bank in eq~\eqref{eq:filter} as {\small $\sum_{k=0}^K [\bm{I_N}\bm{x_{F-F'}}, \bm{S}^k\bm{x_{F'}}] \bm{H_k}$} with {\small $ \bm{I}_N$} the identity matrix of dimension $N$. With these reduced graph filters, Theorem~\ref{diss-lgtc} still holds true.

To compute the state of the system in eq.~\eqref{eq:sys-lgtc} at any time point $T$, the neural network must implement and ODE solver, like Runga-Kutta (RK), that simulates the system starting from a trajectory $\bm{x}(0)$, to $\bm{x}(T)$. When used in a prediction framework, the neural layer~\eqref{eq:sys-lgtc} will forward the state $\bm{x}(T)$ to the next layers. Since the system is stiff, the authors in~\cite{hasani2021liquid} proposed an hybrid solver for LTC networks that has the additional advantage of reducing the vanishing gradient problem due to the ODE resolution. We modified this solver to handle our network as follows:
\begin{equation}
	\small
	\begin{aligned}
		\bm{x(t+\Delta)} &= \frac{\bm{x(t)} +\Delta(- \sum_{k=1}^{K} \bm{S^k}\bm{x(t)}\bm{A_k} + f \circ \sigma_c(B_S(\bm{u}))}{1 + \Delta(b + f)}
	\end{aligned}.
	\label{eq:hybrid-solver}
\end{equation}
With $\Delta=T/n$ being a fixed step size for $n$ evaluations of~\eqref{eq:hybrid-solver}, starting from $t=0$ until reaching $t=T$. The support matrix $\bm{S}$ and the input $\bm{u}$ are fixed in the simulation interval $[0, T]$. Each iteration of ~\eqref{eq:hybrid-solver} requires to communicate again the network state, e.g. $\bm{x(t+\Delta)}$ to evaluate $\bm{x(t+2\Delta)}$. Therefore, solving the ODE nullifies the reduction in the communicated variables for the high expressivity of the model. For this reason, we provide in the next section a closed-form solution for LGTC that does not require an ODE solver, thus significantly gaining in computation time and communication load.  

\section{CLOSED-FORM APPROXIMATION}
\label{sec:cf-approximation}
 By fixing the input and support matrix between the interval $t=[0,T]$, the ODE solution should satisfy the following integral:
\begin{equation}
	\small
		\bm{x(T)} = \bm{x(0)} + \int_{0}^{T}F(\bm{x(\tau)},\bm{S(0)},\bm{u(0)})d \tau
	\label{eq:f-int} 
\end{equation}
However, solving this integral is not straightforward due to the non-linearity in $F$. We aim to find an approximate form of eq.~\eqref{eq:f-int} that keeps the same dynamic properties of the original system. Given the following quantities 
\begin{equation*}
	\small
	\begin{split}
		\bm{f}_i =& -D_x\bm{f}\diag(\bm{x(T)}_{|})\sum_{k=0}^{K} \bm{\hat{A}_k}^T \otimes \bm{S(0)}^k +\sum_{k=1}^{K} \bm{A_k}^T \otimes \bm{S(0)}^k \\
		\bm{f}_x =& \rho(\hat{A}_S(\bm{x(T)})+\bm{b_x}) \\
		\bm{f}_{\sigma} =& \rho(\hat{B}_S(\bm{u(0)}) + \bm{b_u}) + \rho(\hat{A}_S(\bm{x(0)}) + \bm{b_x}) \\
	\end{split}
\end{equation*}
we propose the following closed-form approximation
\begin{equation}
	\label{eq:cgfc}
	\small
	\begin{split}
		\bm{x(T)_{|}} =& e^{-(\diag(\bm{b}_{|}) +\diag(\bm{f}_{x|}) + \bm{f}_i)T} \diag(\sigma(2\bm{f}_{\sigma})_{|})\bm{x(0)_{|}} \\ & + \diag(1-\sigma(2\bm{f}_{\sigma})_{|})\sigma_{u|}.
	\end{split}
\end{equation} 
Reminding that \small $\sigma_{u}= \sigma_c(B_S(\bm{u(0)}))$, \normalsize one can easily demonstrate that \small $\bm{x(T)} \in \bm{\mathcal{X}}$ \normalsize since its expression is a convex combination with coefficient \small $\sigma(2\bm{f}_{\sigma})$ \normalsize and \small $1-\sigma(2\bm{f}_{\sigma})$. \normalsize This closed-form solution maintains the contraction rate as stated by the following theorem.
\begin{theorem}
	\label{closed-form}
	With $t=[0,T]$, the state trajectory~\eqref{eq:cgfc} is a closed-form approximation of the dynamic system~\eqref{eq:sys-lgtc} with the same contraction rate $c$ of the Theorem~\ref{diss-lgtc}.
\end{theorem}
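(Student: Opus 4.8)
The plan is to show that, with $\bm{u}$ and $\bm{S}$ held fixed on $[0,T]$, the map $\Phi:\bm{x}(0)_{|}\mapsto\bm{x}(T)_{|}$ defined by~\eqref{eq:cgfc} contracts by the factor $e^{-cT}$, with exactly the rate $c$ of Theorem~\ref{diss-lgtc}; together with Definitions~\ref{contraction} and~\ref{diss-contractive} this is what ``preserving the contraction rate'' means for the closed form. The whole argument rests on identifying the matrix in the exponent, $M:=\diag(\bm{b}_{|})+\diag(\bm{f}_{x|})+\bm{f}_i$, with (minus) the Jacobian bound already obtained in the proof of Theorem~\ref{diss-lgtc}, so that $\mu_{\infty}(-M)\le-c$ uniformly.

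First I would expand $-M=-\diag(\bm{b}_{|})-\diag(\bm{f}_{x|})-\sum_{k=1}^{K}\bm{A_k}^T\otimes\bm{S}^k+D_x\bm{f}\,\diag(\bm{x}(T)_{|})\sum_{k=0}^{K}\bm{\hat{A}_k}^T\otimes\bm{S}^k$ and compare it, block by block, with the chain of estimates carried out for $D_xF$ in the proof of Theorem~\ref{diss-lgtc}. Using subadditivity of $\mu_{\infty}$, the identity $\mu_{\infty}(\bm{X})=\|\bm{X}\|_{\infty}$ on nonnegative blocks, the constraints $\bm{b}\ge0$ and $\mu_{\infty}(\sum_{k=1}^{K}\bm{A_k}^T\otimes\bm{S}^k)\ge0$, the fact that $\bm{f}_x=\rho(\hat{A}_S(\bm{x}(T))+\bm{b_x})\ge0$ and $D_x\bm{f}\in[0,1]^{NF}$, and — crucially — the observation made just before the theorem that $\bm{x}(T)\in\mathcal{X}\subseteq[-1,1]^{N\times F}$, so that $-D_x\bm{f}\,\diag(\bm{x}(T)_{|})$ ranges in the same box $[-1,1]^{NF}$ that bounded $\diag(\sigma_{u|}-\bm{x}_{|})\diag(D_x\bm{f})$ in that earlier argument (in fact even without the factor $2$, since $-\bm{x}(T)_{|}\in[-1,1]^{NF}$ rather than $[-2,2]^{NF}$), the same submultiplicativity steps give $\mu_{\infty}(-M)\le-\|\bm{b}\|_{\infty}-\|A_{1,K}\|_{\infty}\|\tilde{S}_{1,K}\|_{\infty}-\|\bm{b_x}\|_{\infty}+\|\hat{A}_{0,K}^T\|_{\infty}\|\bar{S}_{0,K}\|_{\infty}=-c$.

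Then I would invoke the standard log-norm estimate $\|e^{At}\|_{\infty}\le e^{\mu_{\infty}(A)t}$ for $t\ge0$, which yields $\|e^{-MT}\|_{\infty}\le e^{-cT}$. Since in~\eqref{eq:cgfc} the only factor multiplying $\bm{x}(0)_{|}$ besides $e^{-MT}$ is the diagonal matrix $\diag(\sigma(2\bm{f}_{\sigma})_{|})$ with entries in $[0,1]$, and the additive term $\diag(1-\sigma(2\bm{f}_{\sigma})_{|})\sigma_{u|}$ is independent of $\bm{x}(0)$ once $\bm{u}$ and $\bm{S}$ are frozen, two closed-form trajectories with the same input and support satisfy $\|\bm{x}_1(T)-\bm{x}_2(T)\|_{\infty}\le e^{-cT}\|\bm{x}_1(0)-\bm{x}_2(0)\|_{\infty}$; allowing mismatched $\bm{u}$ and $\bm{S}$ reproduces the $l_u,l_S$ gains exactly as in Theorem~\ref{diss-lgtc} (the $\bm{f}_{\sigma}$, $\hat{B}_S(\bm{u})$ and $\hat{A}_S$, $B_S$ dependences are Lipschitz and bounded), so the full $\delta$ISS inequality of Definition~\ref{diss-contractive} transfers to~\eqref{eq:cgfc} with the same $c$.

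The main obstacle is the implicit dependence of $M$ on the endpoint $\bm{x}(T)$ (and of $\bm{f}_{\sigma}$ on $\bm{x}(0)$): the exponent is not evaluated at the known initial state, so a fully rigorous incremental analysis would have to differentiate $\Phi$ through this implicit relation. I would circumvent this by noting that the contraction-rate claim needs only the \emph{uniform} bound $\mu_{\infty}(-M)\le-c$ over all admissible $\bm{x}(T)\in\mathcal{X}$, which Step~2 already delivers pointwise, while the residual $\bm{x}(0)$-dependence enters only through coefficients valued in $[0,1]$ and through Lipschitz terms absorbed into $l_u,l_S$; equivalently, one may read~\eqref{eq:cgfc} as the definition of the closed-form layer with $\bm{f}_i,\bm{f}_x$ frozen at $\bm{x}(0)$, in which case the estimate is exact. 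In either reading, the log-norm identification of Step~2 is the heart of the proof.
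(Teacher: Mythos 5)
Your core estimate $\mu_{\infty}(-M)\le -c$ for the exponent matrix $M=\diag(\bm{b}_{|})+\diag(\bm{f}_{x|})+\bm{f}_i$ is sound and is indeed the same comparison with the Theorem~\ref{diss-lgtc} bound that the paper relies on (your observation that $-D_x\bm{f}\,\diag(\bm{x}(T)_{|})$ lives in $[-1,1]^{NF}$ rather than $[-2,2]^{NF}$ is a correct, even slightly sharper, version of that step). However, there is a genuine gap in how you convert this into the contraction claim, and you have in effect diagnosed it yourself in your last paragraph without closing it. The gate $\sigma(2\bm{f}_{\sigma})$ is \emph{not} a constant coefficient in $[0,1]$: it contains $\rho(\hat{A}_S(\bm{x}(0))+\bm{b_x})$, so it is a genuine function of the state (and of the neighbours' states), and the exponent itself depends on $\bm{x}(T)$. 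Differentiating the map $\Phi$ therefore produces extra terms proportional to $\sum_{k=0}^{K}\bm{\hat{A}_k}^T\otimes\bm{S}^k$ (multiplied by $\diag(e^{-MT}\bm{x}(0)_{|}-\sigma_{u|})$ and the derivative of $\sigma(2\cdot)$) that do not decay in $T$; they are state dependences, so they cannot be ``absorbed into $l_u,l_S$'', and ``freezing $\bm{f}_i,\bm{f}_x$ at $\bm{x}(0)$'' changes the layer being analysed. Consequently the uniform-in-$T$ inequality $\|\bm{x}_1(T)-\bm{x}_2(T)\|_{\infty}\le e^{-cT}\|\bm{x}_1(0)-\bm{x}_2(0)\|_{\infty}$ that you assert does not follow from your argument (and is stronger than what the theorem actually claims).

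The paper's proof takes a related but different route that is built precisely to handle these terms: it computes the \emph{full} Jacobian $D_x\bm{x}(t)_{|}$ of the closed-form map --- including the contributions from the $\bm{x}$-dependence of the exponent and of $\sigma(2\bm{f}_{\sigma})$, the latter entering through the Lipschitz constant $0.5$ of $\sigma(2x)$ --- bounds its induced norm, and then uses the identity
$\sup_x\mu_{\infty}(D_xF)=\tfrac{\partial}{\partial t}\sup_x\|\exp(D_xFt)\|_{\infty}\big|_{t=0^+}$
to show that the time derivative of that norm at $t=0^+$ equals $-c$. In other words, ``same contraction rate'' is established as a statement about the instantaneous rate at $t=0^+$, not as a global $e^{-cT}$ Lipschitz factor for the discrete map. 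To repair your proof you would either need to adopt this $t=0^+$ reading and carry the gate and exponent derivatives through the Jacobian as the paper does, or else add the extra $\sum_k\bm{\hat{A}_k}^T\otimes\bm{S}^k$-terms explicitly to your incremental bound, which would degrade the rate away from $c$ for finite $T$.
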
 
\begin{proof}
The theorem can be demonstrated by showing that the closed-loop system maintains the same dynamic properties. We start the proof by highlighting the relationship between the superior of the induced log-norm and the superior of the time derivative of the induced norm of the matrix exponential  
\begin{equation}
	\small
	\sup\limits_x \mu_{\infty}(D_xF)= \frac{\partial}{\partial t} \sup\limits_{x} || \exp(D_xFt) ||_{\infty}\Big|_{t=0^+}
\label{eq:equal-c}
\end{equation}
Therefore, we can show that the equivalence between the right-hand side of the previous relation and the induced norm of the jacobian \small ($D_x\bm{x(T)}$) \normalsize of eq.~\eqref{eq:cgfc}. Let's name $F_{ex}$ the argument of the exponential in the eq.~\eqref{eq:cgfc}. \small $||D_x\bm{x(t)}||_{\infty}$ \normalsize satisfies the following :
\begin{equation}
	\small
	\begin{split}
	||D_x\bm{x(t)_{|}} ||_{\infty} &= ||e^{F_{ex}} \diag(\sigma(2f_{\sigma})_{|}) + \\ & \diag(\bm{x}_{|})e^{F_{ex}}(\diag(\sigma(2f_{\sigma})_{|})\sum_{k=0}^{K} \bm{\hat{A}_k}^T \otimes \bm{S}^k t + \\ &\diag(\bm{x}_{|})e^{F_{ex}}2\diag(D_{\sigma |})\sum_{k=0}^{K} \bm{\hat{A}_k}^T \otimes \bm{S}^k -  \\ & \diag(\sigma_u)2\diag(D_{\sigma|})\sum_{k=0}^{K}\bm{\hat{A}_k}^T \otimes \bm{S}^k||_{\infty} \\
	&\leq ||e^{F_{ex}}||_{\infty} + ||e^{F_{ex}}\sum_{k=0}^{K} \bm{\hat{A}_k}^T \otimes \bm{S}^k t||_{\infty} + \\ & 0.5(||e^{F_{ex}}||_{\infty} + 1)||\sum_{k=0}^{K} \bm{\hat{A}_k}^T \otimes \bm{S}^k||_{\infty} \\
    \leq ||e^{F_{ex}}||_{\infty} + &||e^{F_{ex}}\sum_{k=0}^{K} \bm{\hat{A}_k}^T \otimes \bm{S}^k t||_{\infty} + ||\sum_{k=0}^{K} \bm{\hat{A}_k}^T \otimes \bm{S}^k||_{\infty}
	\end{split}
\end{equation}
where the last inequalities come from similar reasons to the proof in the theorem~\ref{diss-lgtc} and the Lipschitz constant of $\sigma(2x)$ equal to $0.5$. The following time derivative proves the equality~\eqref{eq:equal-c} 
\begin{equation}
 	\small
 	\begin{split}
 	 \frac{\partial}{\partial t}\sup\limits_x||D_x\bm{x(t)_{|}}||_{\infty}\big|_{t=0^{+}} =& -||\bm{b}||_{\infty} - ||A_{1,K}||_{\infty}||\tilde{S}_{1,K}||_{\infty} - \\ & ||\bm{b_x}||_{\infty} +   ||\hat{A}_{0,K}^T||_{\infty}||\bar{S}_{0,K}||_{\infty} 
\end{split}
\end{equation}
Moreover, one can easily demonstrate that the Lipschitz constant in \small $\bm{u}$ \normalsize and \small $\bm{S}$ \normalsize is upper bounded by $l_u,l_S$ respectively, taking into account that $te^{-(F)t}; F>0$ can be upper bounded by $1$.  
\end{proof}

With a fixed $T$, the state trajectory~\eqref{eq:cgfc} can serve as a discrete-time system that in training and evaluation requires one iteration/communication to be evaluated, against the $n$ iteration required by the continuous system. However, equation~\eqref{eq:cgfc} can not be directly distributed over the agents due to the kronacker product terms appearing in $f_i$. Moreover, notably, the exponential term in recursion causes vanishing gradient problems in training. Therefore, we modified the trajectory~\eqref{eq:cgfc} to tackle these two issues. First, we multiply the kronacker product terms by $\bm{x}/(\bm{x}+\epsilon)$, with $/$ being the division element by element and a small $\epsilon$. Second, we replace the exponential term with a sigmoidal nonlinearity which decreases much more smoothly. The resulting closed-form graph time constant (CfGC)  system is: 
\begin{equation}
	\small
	\begin{cases}
		f_{\sigma} = \rho(\hat{B}_S(\bm{u}) + \bm{b_u}) + \rho(\hat{A}_S(\bm{x})+\bm{b_x}) \\
		f_x = \rho(\hat{A}_S(\bm{x})+\bm{b_x}) \\
		f_i = -D_xf\hat{A}_S(\bm{x})+\sum_{k=1}^K \bm{S}^k \bm{x} \bm{A_k}/(\bm{x}+\epsilon) \\
		\bm{x}^+ = (\bm{x}\circ\sigma(-(\bm{b}+f_x+f_i)t + \pi)-\sigma_u)\circ\sigma(2f_{\sigma})+\sigma_u
	\end{cases}
	\label{eq:cgfc-sys}
\end{equation}
This system still satisfies the theorem~\ref{closed-form} for $\epsilon \rightarrow 0$ since $\sigma(-Ft+\pi)\big|_{t=0} \approx 1$. The model in~\eqref{eq:cgfc-sys} is different from the one in~\eqref{eq:sys-lgtc}, even if, when they have the same weight matrices, they have approximately the same dynamic properties. We demonstrate this good approximation in the result Section~\ref{flocking-control-example}.

\section{VALIDATION EXAMPLE}
\label{flocking-control-example}
We experimentally validated the neural networks proposed in a flocking control example. Despite its simplicity, this scenario has been previously studied~\cite{tolstaya2020learning} % add graph ode
 using different approaches such as GNN, LSTM, allowing for a direct comparison with the proposed model. We also explicitly compare GraphODE~\cite{poli2019graph}, GGNN, LGTC and CfGC, satisfying the conditions in Theorem~\ref{dISS_stab},~\ref{diss-lgtc} and~\ref{closed-form} respectively. The stability condition is imposed by the following regularization in addition to the training loss:
\begin{equation}
    \Pi =  \sum_i \text{Softplus}(p_i)
    \label{eq:reg_stab}
\end{equation}
With $p_i$ being only one element equal to $\delta \mathcal{A}_{i} - 1$ to satisfy the Theorem~\ref{dISS_stab} and equal to $[-c,-\bm{b},-\mu_{\infty}(\sum_{k=1}^K \bm{A_k}^T \otimes \bm{S}^k)]$ for Theorem~\ref{diss-lgtc} and ~\ref{closed-form}. The Softplus is introduced to have a smooth ReLU function that can be regulated through its $\beta$ to enforce a higher contraction rate and consequently fast convergence. We used $\beta = 10$.
\begin{figure}[t]
    \centering
    \includegraphics[scale=0.35]{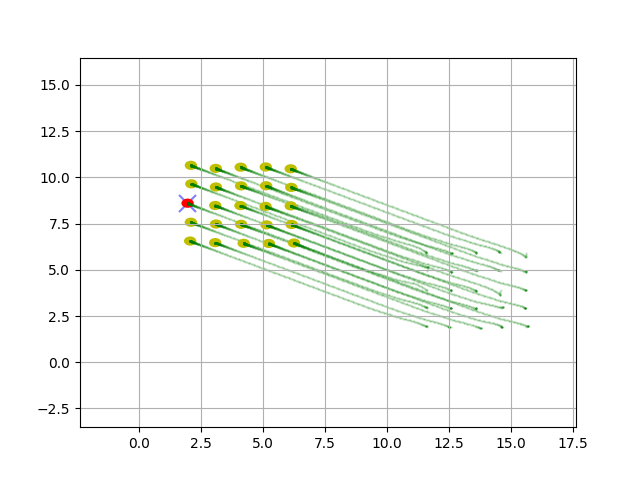}
    \caption{\textbf{Flocking control}: a group of agents (yellow dots) move in order to reach the same velocity and to avoid collision. The leader (red dot) moves in order to reach the target (blue cross) and avoids the collision with the other agents.}
    \label{fig:flocking_example}
\end{figure}

In the following, we show a case study involving flocking control (Fig.~\ref{fig:flocking_example}) with a leader. In this problem, the agents are initialized to follow random velocities while the goal is to have them all travelling at the same velocity while avoiding collisions with each other. Moreover, one of the agents takes the leader role, conducting the team toward a target unknown to the other agents. Flocking is a canonical problem in decentralized robotics~\cite{yu2010distributed, beaver2021overview}.

We considered $N$ agents described by the position $\bm{r}(t) \in \mathbb{R}^{N \times 2}$ and the velocity $\bm{v}(t) \in \mathbb{R}^{N \times 2}$ with a double integrator dynamics
\begin{equation*}
        \bm{r}(t+1) = \bm{r}(t) + T\bm{v}(t); \quad 
        \bm{v}(t+1) = \bm{v}(t) + T\bm{u}(t);
\end{equation*}
with the discrete acceleration $\bm{u}(t) \in \mathbb{R}^{N \times 2}$ taken as system input.  Note that the agent dynamics is used for building the dataset and for simulation purposes, but it is not provided to the learning algorithm. The flocking expert controller~\cite{tolstaya2020learning} for the follower agents is given by 
\begin{subequations}
\begin{equation}
    \bm{u}_f(t) =  \frac{1}{N}\sum_{i=1}^{N} \bm{v_i}(t) - \nabla_{\bm{r}}CA(\bm{r}(t),\bm{r}_j(t))|_{j=1\dots N} 
\end{equation}
\text{and for the leader it is}
\begin{equation}
    \bm{u}_l(t) = -W_p(\bm{r}_l(t) - \bm{d}(t)) - \nabla_{\bm{r}_l}CA(\bm{r}_l(t),\bm{r}_j(t))|_{j=1\dots N} 
\end{equation}
\label{eq:expert_flocking}
\end{subequations}
\noindent where $W_p$ is a gain, $\bm{r}_l \in \mathbb{R}^{2}$ is the leader position, $\nabla_{\bm{r}} CA(\bm{r}(t), \bm{r}_j(t))$/$\nabla_{\bm{r}_l}CA(\bm{r}_l(t), \bm{r}_j(t))$ are the gradient of the collision avoidance potential with respect to the position of the agents/leader $\bm{r}$/$\bm{r}_l$, evaluated at the position $\bm{r}(t)$/$\bm{r}_l(t)$ and the position of every other agent $\bm{r}_j(t)$ at time $t$. The i-element of $\nabla_{\bm{r}}CA$ for each robot $i$ with respect to robot $j$ is given by \cite{tanner2003stable}
\begin{equation}
    \nabla_{\bm{r}_i} CA(\bm{r}_{ij}) = \begin{cases}
    -\frac{\bm{r}_{ij}}{||\bm{r}_{ij}||_2^4} - \frac{\bm{r}_{ij}}{||\bm{r}_{ij}||_2^2} & if ||\bm{r}_{ij}||_2^2 \leq R_{CA} \\
    \qquad \bm{0} & otherwise
    \end{cases}
\end{equation}
with $\bm{r}_{ij} = \bm{r}_i - \bm{r}_j$ and $R_{CA} > 0$ indicating the minimum acceptable distance between agents. This potential function is a non-negative, non-smooth function that goes to infinity when the distance reduces and grows when the distance exceeds $R_{CA}$, in order to avoid the team losing  connectivity~\cite{tanner2003stable}. $\bm{u}_f(t),\bm{u}_l(t)$ are a centralized controller since computing them requires agent $i$ to have instantaneous evaluation of $\frac{1}{N}\sum_{i=1}^{N} \bm{v_i}(t)$ and $\bm{r}_j(t)$ of every other agent $j$ in the team. $R_{CA}$ and $W_p$ are tunable parameters of the controllers.  
\subsection*{\textbf{Neural Network Architecture}}
We assume that the agents form a communication graph when they are in a sphere of radius $R$ between each others and that exchanges occur at the sampling time $T=0.05$~s, so that the action clock and the communication clock coincide. \\
The input features vector $\bm{w}_i \in \mathbb{R}^{10}$ of the robot $i$ for the designed neural network is 
\begin{equation}
\begin{split}
    \bm{w}_i = \Bigg[ & \bm{v}_i, \sum_{j \in \mathcal{NS}_i} \frac{\bm{r}_{ij}}{||\bm{r}_{ij}||_2^4}, \sum_{j \in \mathcal{NS}_i}\frac{\bm{r}_{ij}}{||\bm{r}_{ij}||_2^2}, \\ & \{\bm{0}_2, \bm{r}_l - \bm{d} \}, \{ [0,1],[1,0] \} \Bigg]
\end{split}
\end{equation}
where $\mathcal{NS}_i$ is the set of the sensing agents within a sphere of radius $R_{CA}$ centred in the robot $i$. Moreover, the vector $w_i$ contains the zero vector $\bm{0}_2 \in \mathbb{R}^{1 \times 2}$ and the one-hot encoding $[0,1]$, if the agent is a follower, while $(\bm{r}_l - \bm{d})$ and $[1,0]$, if the agent is a leader. We chose the one-hot encoding instead of the binary one because it allows differentiating the neural network weights between the leader and the follower. Note that we assume all the information in the vector $\bm{w}_i$ to be locally available at the sampling/control time $T$. \\
The core of the neural network for the flocking control is a layer of GGNN, GraphODE, LGTC or CfGC with $F=50$ features in the hidden state and filter length $K=2$. Note that the choice of $K$ affects the complexity of the stability condition imposed since it will constrain more parameters. For LGTC and GraphODE, we implement the hybrid solver~\eqref{eq:hybrid-solver} and RK4, respectively. The input features are first processed by a cascade of two fully connected layers of $128$ nodes before feeding the graph neural network. A readout of two layers with $128$ nodes combines the $F$-features GNN hidden state to get the bidimensional control $\bm{u}$ saturated to the maximum admissible control. The input layers and the readout that encapsulate the graph neural network shape a more realistic setting to test the stability of the graph neural network that is usually used in combination with other kinds of neural models. To reduce the communicated variables, we used $F'=4$ and $G=4$ for a subset of the state and the input to communicate. Following the Remark \ref{remark-nL}, we used normalized Laplacian as a support matrix. 
\subsection*{\textbf{Training}}
\begin{figure*}[t]
\centering
    \includegraphics[width=\textwidth]{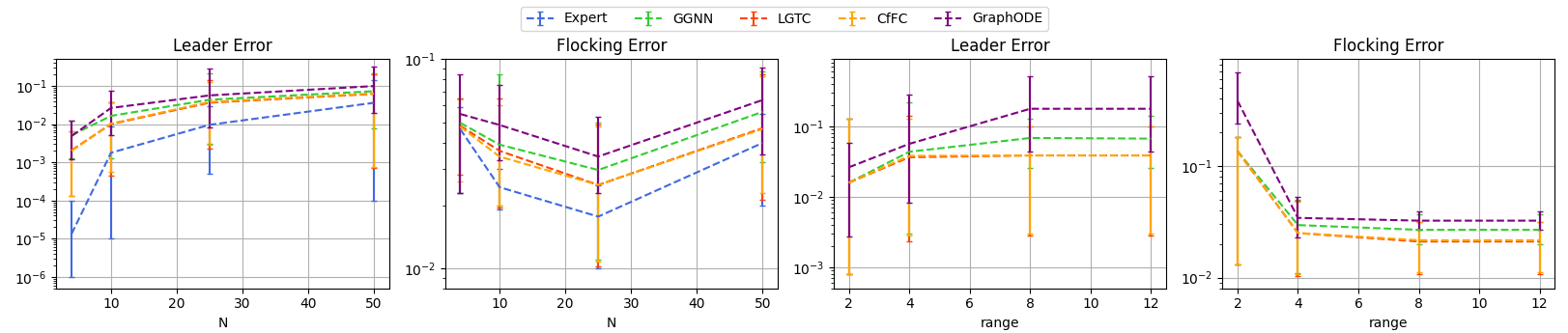}
\caption{Flocking and Leader Error for GGNN, LGCT, CfGC and GraphODE, varying the team size N with a communication range of $4$m and a variable communication range with $N=25$.}
\label{fig:flocking_results}
\end{figure*}
We collected a dataset by recording 60 trajectories,  further separated into three subsets of training, validation and test set using the proportion $70\%-10\%-20\%$, respectively. Each trajectory is generated by randomly positioning the agents in a square such that their inter-distance is between $0.6$~m and $1.0$~m and that their initial velocities are picked at random from the interval $[ -2, 2 ]$~m/s in each direction. The leader is randomly selected among the agents and the target position is randomly located within a square of length $20$~m centered at the location of the leader. Regardless of the target location, the trajectories have a duration of $2.5$~s and input saturation at $5$~m/s$^2$. Moreover, the $60$ trajectories are recorded with a random number of agents among $N=[4,6,10,12,15]$. We fixed the communication range to $R=4$~m and the sensing to $R_{CA}=1$~m. We trained the models for $120$ epochs and executed the DAGGER algorithm~\cite{ross2011reduction} every $20$ epochs. The algorithm evaluates the expert controller in~\eqref{eq:expert_flocking} on the enrolled state trajectories applying the learned control and adding them to the training set. Note that, thanks to the use of DAGGER, we do not need a large dataset. We solve the imitation learning problem using the ADAM algorithm~\cite{kingma2014adam} with a learning rate $1e-3$ and forgetting factors $0.9$ and $0.999$. The loss function used for imitation learning is the mean squared error between the output of the model and the optimal control action.
\subsection*{\textbf{Results}}
In Fig.~\ref{fig:flocking_results}, we show a comparison between GraphODE, GGNN, LGTC and CfGC controllers for the flocking control case. We evaluate the 4 controllers on 2 sets of experiments with $20$ trajectories each. In the experiments, we varied team size and communication range to test the robustness of the controllers. Figure~\ref{fig:flocking_results} reports the leader position error evaluated after a fixed time of $2.5s$ with respect to the leader starting location, i.e. $e_f/e_s$ with $e_f,e_s$ respectively being the final and the initial square distance of the leader from the target. We also show the average flocking error in the interval $[0,2.5s]$ in logarithmic scale. We assume that the communication happens within the sampling time $T$.

In the first experiment, the controller scalability is evaluated for team sizes $N=[4,10,25,50]$ with a communication range at $4$~m. Figure~\ref{fig:flocking_results} shows empirically the higher prediction abilities of LGTC and CfGC compared to GGNN, as they are up to $40\%$ for the flocking error and $10\%$ for the leader error closer to the expert controller. This is despite LGTC and CfGC communicating fewer state and input variables as explained in the previous section. Moreover, the performances of LGTC and CfGC are close to each other, confirming the good approximation of CfGC. Sometimes, for example, for the flocking error with $N=10$, we can see CfGC performing better than LGTC of $1-2\%$. This can be justified by the fact that CfGC is simpler to train, as it is a discrete model, and so it reduces the vanishing gradient phenomena. In general, GraphODE is the one that performs the worst both for leader and flocking errors. This can be explained easily by no direct use of the input features, as they are only used to initialize the internal state, resulting in an auto-regressive model.

In the second set of experiments, we evaluate the performances of the networks when the communication range differs from the training range $R=4$~m. For $2$~m all the models show a drop in the performances with $0.18$~m/s in the flocking error for GGNN,LGTC and CfGC, and $0.3$~m/s for GraphODE. This drop appears due to the lower ability to broadcast the information among the agents when they space out over $2$~m. As expected, when the communication range increases, the flocking errors decrease for all controllers, since they can communicate with more agents at the same time. However, in this case, the leader error increases, since the leader is often ``encapsulated" by the other team agents and it is thus forced to follow them to not collide, causing a slower convergence to the target. Note that this behaviour also affects the expert controller. Notably, also for this experiment, LGTC and CfCG have better performances compared to the others and are similar to each other.

\section{CONCLUSIONS}
\label{conclusions}
In this work, we model a new continuous-time graph neural network, Liquid-Graph Time-constant (LGTC). Its enhanced prediction capabilities allow us to reduce communication load, one of the main drawbacks of GNNs. We analyze its dynamic properties leveraging contraction analysis and find a closed-form approximation (CfGC) that preserves the dynamic properties of the system. We validate the proposed model in a flocking control case and compare it with $\delta$ISS GGNN and the GraphODE. Results suggest that, despite the reduced communicated vector, LGTC and CfGC have performances closer to the centralized expert and more robust to parametric changes in a deployment scenario, such as communication radius and team size. Future works will consider how to further reduce the communication load, still high compared to model-based approaches, as well as an experimental validation on quadrotor UAVs.

\bibliographystyle{IEEEtran} 
\bibliography{IEEEabrv,bibliografy}

\end{document}